\documentclass[
]{ceurart}

\usepackage{listings}
\usepackage{multicol}
\usepackage{amsmath}
\usepackage{amsthm}
\usepackage{amssymb}
\usepackage{enumitem}
\usepackage{amsfonts}
\usepackage{mathtools}
\usepackage{xcolor}
\usepackage{hyperref}
\usepackage{cleveref}
\usepackage{tikz}
\usepackage{thmtools}
\usepackage{thm-restate}
\usetikzlibrary{arrows,calc,patterns,positioning,shapes,fit}
\usetikzlibrary{decorations.pathmorphing}
\tikzset{
modal/.style={>=stealth,shorten >=1pt,shorten <=1pt,auto,
node distance=1.5cm,semithick},
world/.style={circle,opacity=0,draw=white,minimum size=1cm,fill=white},
point/.style={circle,draw,fill=black,inner sep=0.5mm},
reflexive/.style={->,in=120,out=60,loop,looseness=#1},
reflexive/.default={5},
reflexive point/.style={->,in=135,out=45,loop,looseness=#1},
reflexive point/.default={25},
coil/.style={decorate, decoration={coil,amplitude=4pt,segment length=5pt}},
snake/.style={decorate, decoration={snake}},
zigzag/.style={decorate, decoration={zigzag}}
}

\newtheorem{thm}{Theorem}[section]

\newtheorem{definition}[thm]{Definition} 
\newtheorem{example}[thm]{Example} 

\newcommand{\ThreeSAT}{\mathrm{3SAT}}
\newcommand{\MaxTwoSAT}{\mathrm{Max2SAT}}

\newcommand{\NP}{\mathrm{NP}}
\newcommand{\NL}{\mathrm{NL}}
\newcommand{\LOGSPACE}{\mathrm{LOGSPACE}}

\def\dep{\mathord{=}}

\begin{document}

%%
%% Rights management information.
%% CC-BY is default license.
% \copyrightyear{2026}
% \copyrightclause{Copyright for this paper by its authors.
%   Use permitted under Creative Commons License Attribution 4.0
%   International (CC BY 4.0).}

%%
%% This command is for the conference information
\conference{}
%LINDA'26,
%  July 24, 2026, Lisbon, Portugal}

%%
%% The "title" command
\title{Approximate Functional Dependencies---Implication Problem Revisited}

% \tnotemark[1]
% \tnotetext[1]{You can use this document as the template for preparing your
%   publication. We recommend using the latest version of the ceurart style.}

%%
%% The "author" command and its associated commands are used to define
%% the authors and their affiliations.
\author[1]{Nicolas Fröhlich}[%
orcid=0009-0003-5413-1823,
email=nicolas.froehlich@thi.uni-hannover.de,
]
%\cormark[1]
\fnmark[1]
\address[1]{Leibniz Universität Hannover, Germany}
% \address[2]{Joint Institute for Nuclear Research,
%   6 Joliot-Curie, Dubna, Moscow region, 141980, Russian Federation}

\author[2]{Matilda Häggblom}[%
orcid=0009-0003-6289-7853,
email=matilda.haggblom@helsinki.fi,
]
\fnmark[1]
\address[2]{University of Helsinki, Finland}

\author[2]{\AA sa Hirvonen}[%
orcid=0000-0003-2149-4153,
email=asa.hirvonen@helsinki.fi,
]
\fnmark[1]
% \address[4]{University of Skövde, Högskolevägen 1, 541 28 Skövde, Sweden}

\author[1]{Minna Hirvonen}[%
orcid=0000-0002-2701-9620,
email=minna.hirvonen@thi.uni-hannover.de,
]
\fnmark[1]

%% Footnotes
% \cortext[1]{Corresponding author.}
\fntext[1]{These authors contributed equally.}

%%
%% The abstract is a short summary of the work to be presented in the
%% article.
\begin{abstract}
  %A clear and well-documented \LaTeX{} document is presented as an
  %article formatted for publication by CEUR-WS in a conference
  %proceedings. Based on the ``ceurart'' document class, this article
  %presents and explains many of the common variations, as well as many
  %of the formatting elements an author may use in the preparation of
 %the documentation of their work.\\~\\
Functional dependencies are an important and well-studied class of database constraints that correspond to a notion expressed by dependence atoms in team logic. In practice, data often contain errors, so in some cases it might be useful to allow the database to have a small number of tuples that violate the desired dependency. Väänänen (2017)
%\cite{Väänänen2017} 
studied the axiomatisation of a notion of approximate dependence that specifies for each dependence atom how much of the database can be disregarded. We demonstrate that the interaction of approximate dependence atoms is more complicated than previously thought in the sense that there is a semantic consequence that is not captured by the inference rules introduced before. We show that Väänänen's axiomatisation is still complete in the restricted case of unary dependencies. We also consider the complexity of model checking for approximate dependence: it is NP-complete for disjunctions of two atoms and LOGSPACE-hard for individual atoms.
\end{abstract}

%%
%% Keywords. The author(s) should pick words that accurately describe
%% the work being presented. Separate the keywords with commas.
\begin{keywords}
 % LaTeX class \sep
 % paper template \sep
 % paper formatting \sep
%  CEUR-WS
    Database \sep
    functional dependency \sep
    approximation \sep
    team semantics \sep
    axiomatisation \sep
    model checking.
\end{keywords}

%%
%% This command processes the author and affiliation and title
%% information and builds the first part of the formatted document.
\maketitle

% \section{LINDA abstract}

% +++ Deadline April 25. Extended abstract 2-5 pages long, excluding references. Appendix and link to preprint can be included. 

% \matilda{Should we try to write some short version of the proofs to the theorem and lemma, or simply state the claims in the abstract and include the proofs in the appendix?}
% \minna{I think that either is fine, depends on how much space we have left after adding the intro?}

\section{Introduction}

Functional dependencies were introduced in database theory by Codd in the 1970's \cite{Codd71}, and axiomatised by Armstrong \cite{Armstrong1974} a few years later. In the unirelational case they are equivalent to dependence atoms in team logic. Teams were originally defined by Hodges \cite{hodges1997,hodges19972} to give a compositional semantics for so-called independence-friendly logic. Väänänen \cite{vaananen2007dependence} introduced dependence logic, allowing the explicit study of relationships between variables. 

The rise of big data offers strong motivation to study dependencies that disregard statistical outliers in the data set. Approximate notions of dependence are also motivated by considerations of natural language, such as `most', `all but a few', and were formalised for functional dependencies by Kivinen and Mannila \cite{KIVINEN1995129} with various measures of the part of the information allowed to be erroneous. Väänänen adapted one of the definitions as approximate dependence atoms in the team semantic setting \cite{Väänänen2017}, and proposed an axiomatisation of its implication problem.

% Teams were originally defined by Hodges [ADD CITATION] to give a compositional semantics for so-called independence friendly logic. Väänänen \cite{vaananen2007dependence} introduced dependence atoms, allowing to explicitly study relationships between variables. Later Väänänen \cite{Väänänen2017} defined an approximate notion of dependence, motivated by considerations of natural language, such as 'most', 'all but a few', where he studies a notion of dependence, where part of the information is allowed to be erroneous. 

%The rise of big data offers strong motivation to study dependencies that disregard statistical outliers in the data set. Väänänen's paper \cite{Väänänen2017} defines a notion of approximate dependence, where the atoms specify how much of the team/database needs to be disregarded in order for the dependence relation to hold. 

Unfortunately, the axiomatisation in \cite{Väänänen2017} disregards a subtle point in dependence that only turns up in the approximate notion. Where in pure functional dependencies, a tuple $yz$ depends on $x$ exactly when both $y$ and $z$ depend on $x$, in an approximate version, where small errors are allowed, the distribution of errors between variables introduces interplay between the variables that is hard to axiomatise. In this paper, we illustrate this phenomenon by presenting a `missing rule' - a semantic consequence that cannot be obtained from the axiomatisation in \cite{Väänänen2017}.

Väänänen's axiomatisation is still complete in the restricted case of \emph{unary dependencies}, where simple dependence statements `$y$ depends on $x$' are considered for single variables $x$ and $y$. We extract the necessary rules for unary and constant dependencies, and prove completeness (there was an error in the proof in \cite{Väänänen2017}). Such axiomatizations could help speed up algorithms for discovering approximate dependencies in databases (see, e.g., \cite{Huhtala1999}), with applications in data mining. 

We also examine the complexity of the model checking problem for approximate dependence.

\section{Definitions}

We recall basic team semantic definitions together with the approximate dependence atoms from \cite{Väänänen2017}.

%\minna{Added some definitions below. It might be that we don't need all of them, so we can remove/shorten these if needed.}
Let $D$ and $M$ be sets of variables and values, respectively. An assignment $s$ over $D$ is a function $s\colon D\to M$. A team $X$ over $D$ is a set of assignments $s\colon D\to M$. For a team $X$ over $D$ and a tuple $x=x_1\dots x_n$ of variables from D, we define the notation $X[x]:=\{s(x)\in M^n\mid s\in X\}$, where $s(x)=s(x_1)\dots s(x_n)$. For $x=x_1\dots x_n$ and $y=y_1\dots y_m$, we write $xy$ for the tuple $x_1\dots x_ny_1\dots y_m$. The notation $|x|$ refers to the length of the tuple $x$, e.g, $|x|=n$ for $x=x_1\dots x_n$. 

Let $x$ and $y$ be finite tuples of variables from $D$ such that $y$ is nonempty, i.e., $|y|\geq 1$. An expression $\dep(x,y)$ is called a \emph{dependence atom}. The satisfaction relation between a team $X$ over $D$ and the atom $\dep(x,y)$ is defined by %satisfied in team $X$, written as $X\models\dep(x,y)$, if and only if for all $s,s\in X$, $s(x)=s'(x)$ implies $s(y)=s'(y)$.
$$\text{$X\models\dep(x,y)$ if and only if for all $s,s'\in X$, $s(x)=s'(x)$ implies $s(y)=s'(y)$.}$$

Since $X\models\dep(x,y)$ if and only if there is a function $f\colon X[x]\to X[y]$ such that $f(s(x))=s(y)$ for all $s\in X$, dependence atoms are often called \emph{functional dependencies}, especially in database theory. Note that the tuple $x$ is allowed to be empty\footnote{In the dependence atom $\dep(x,y)$, the tuple $y$ is assumed to be nonempty, because otherwise the atom $\dep(x,y)$ would be trivially satisfied by any team over $D$ that contains the variables $x$.}, and in that case, we write $\dep(y)$ instead of $\dep(,y)$. The atom $\dep(y)$ is called the \emph{constancy atom}, as it states that the value of $y$ is constant in the team. Let $\Sigma\cup\{\tau\}$ be a set of atoms with variables from $D$. We write $X\models\Sigma$, if $X\models\sigma$ for all $\sigma\in\Sigma$. We say that $\Sigma$ \emph{semantically entails} $\tau$, written as $\Sigma\models\tau$, if for all $X$ over $D$, we have that $X\models\Sigma$ implies $X\models\tau$.

We consider a notion of \emph{approximate dependence} that states that a dependence must hold in a team when we are allowed to remove some assignments (usually a small portion of the whole team). 
\begin{definition}[\cite{Väänänen2017}]\label{Vään.def.non-restr2}
Let $p$ be a real number, $0 \leq p \leq 1$. For finite teams, $X\models {\dep_p(x, y)}$ if there is a subteam $Y\subseteq X$, $|Y | \leq p \cdot |X|$, such that $X \setminus Y\models \dep(x, y)$. %For arbitrary teams (finite or infinite)$X\models =^\prime (x, y)$ mod finite, if there is a finite $Y$ such that $X \setminus Y$ satisfies$=(x, y)$.
\end{definition}

The following set of rules is sound for approximate dependencies. 

\begin{definition}[\cite{Väänänen2017}] \label{def:OGrules} 
Rules $A1$-$A7$ form the system $\mathbf{A}$ for approximate dependencies.

% \begin{multicols}{2}
\begin{enumerate}
\centering
\begin{minipage}{0.375\textwidth}
\item[(A1)] $\dep_0(xy, x)$.

\item[(A2)]  $\dep_1(x,y)$.

\item[(A3)] If $\dep_p(x, yv)$, then $\dep_p(xu, y)$.

\item[(A4)] If $\dep_p(x, y)$, then $\dep_p(xu, yu)$.
\end{minipage}
\begin{minipage}{0.475\textwidth}
\item[(A5)] If $\dep_p(xu, yv)$, then $\dep_p(ux, yv)$ and $\dep_p(xu, vy)$.

\item[(A6)] For $r=min\{p+q,1\}$, if $\dep_p(x, y)$ and $\dep_q(y, v)$, then $\dep_r(x, v)$.

\item[(A7)] For $p \leq q \leq 1$,  if $\dep_p(x, y)$, then $\dep_q(x, y)$. 
\end{minipage}
\end{enumerate}  
% \end{multicols}
\end{definition}

For a set of approximate dependencies $\Sigma$ and a system of rules $\mathbf{B}$, we write $\Sigma\vdash_\mathbf{B}\dep_p(x,y)$ if we can apply the rules in $\mathbf{B}$ to the dependencies in $\Sigma$ to derive the conclusion $\dep_p(x,y)$. A system $\mathbf{B}$ is sound if whenever $\Sigma\vdash_\mathbf{B} \dep_p(x,y)$, also $\Sigma\models \dep_p(x,y)$. 
%\minna{Should we formally define soundness or at least the notation $\vdash$?} 
As shown in \cite{Väänänen2017}, the rules are sound, but we challenge the claim that they are \emph{complete} in the sense claimed in \cite{Väänänen2017}, i.e., that for a finite set of approximate dependence atoms, $\Sigma\models\dep_p(x, y)$ would imply that $\Sigma\vdash_{\mathbf{A}}\dep_p(x, y)$.
%
% \begin{quote}%(\cite{Väänänen2017})
%     Suppose $\Sigma$ is a finite set of approximate dependence atoms. Then
% $\dep_p(x, y)$ follows from $\Sigma$ by the above axioms and rules if and only if every finite
% team satisfying $\Sigma$ also satisfies $\dep_p(x, y)$.
% \end{quote}

First, let us note a minor issue that arises when the dependence is defined on sequences rather than on sets of variables/attributes, as is common in the database theory literature. Using system $\mathbf{A}$, we cannot derive $\dep_p(x,xy)$ from $\dep_p(x,y)$. Instead, we can derive $\dep_p(x,y)\vdash_\mathbf{A}\dep_p(xx,xy)$, so if we consider $xx$ as different from $x$, the set of rules is not complete even for usual dependencies with approximation $0$. %In particular, we can not derive $\vdash\dep_0(x,xx)$. 
To solve this, as pointed out in \cite{HannulaKontinenYang_LogicsTeamSemantics}, we can replace the rules $A3$ and $A4$ with the rule $A_{34}$: $$(A_{34})\quad \text{If $\dep_p(x,y)$, then $\dep_p(x,xy)$.}$$ 

System $\mathbf{A}^*$ consisting of rules $A1,A2,A_{34},A5,A6$ and $A7$ is complete for usual dependencies sensitive to repetition of variables (see e.g., \cite{Galliani2014} for a completeness proof in the team semantic setting). %It is now an easy exercise to show that the rules $A3$ and $A4$ are derivable using rules $A1,A2,A^*,A5,A6$ and $A7$, and that this set is complete for usual dependencies (see e.g., \cite{Galliani2014} for a completeness proof in the team semantic setting).
However, we show in the next section that there are semantic entailments specific to the approximate setting that are not derivable even in this modified set of rules $\mathbf{A}^*$. 

%\matilda{fix/change this}From here on, we simply write $\vdash$ instead of $\vdash_{A^*}$ unless otherwise specified. 

% \begin{definition}[\cite{Väänänen2017} with exact set of rules from DP-book-draft, some issue with repetitions for the set in \cite{Väänänen2017}; $\dep(x,y)\not\vdash\dep(x,xy)$, but only $\dep(x,y)\vdash\dep(xx,xy)$, so if we consider $xx\neq x$, it is not complete even for usual dependencies when we are sensitive to repetitions.]\label{def:rules}

% The complete set of rules for approximate dependence atoms are listed below.
%   \begin{enumerate} 
% \item[(D1)] $\dep_0(xy, x)$.

% \item[(D2)] For $r=min\{p+q,1\}$, if $\dep_p(y, z)$ and $\dep_q(z, x)$, then $\dep_r(y, x)$.

% \item[(D3)] If $\dep_p(x, y)$, then $\dep_p(x, xy)$.

% \item[(D4)] If $\dep_p(xu, yv)$, then $\dep_p(xu, vy)$ and $\dep_p(ux, yv)$.

% \item[(D5)]  $=_1\!(y, x)$.

% \item[(D6)] For $p \leq q \leq 1$,  if $\dep_p(x, y)$, then $\dep_q(x, y)$.
% \end{enumerate}  
% \end{definition}

\section{Missing Rule}\label{missing}

%\nicolas{write Åsa's idea}
%\nicolas{with or without constancy atom?} \matilda{I think since we include cases with constancy atoms in the corrected proof, it is fine to have them here too. (But if we can remove it that makes for a simpler rule of course...)}
%\asa{I don't think the idea works without constancy, so we need it.} 
We aim to show that there is a semantic entailment among approximate dependence atoms that is not derivable using rules from the system $\mathbf{A}^*$, proving that the system is incomplete. First, we exemplify this semantic entailment by examining a specific team.   

\begin{example}\label{ex:missing rule}
    Let $\Sigma = \{\dep_0(x), \dep_{\frac{1}{8}}(x, v_1v_2), \dep_{\frac{2}{8}}(x, v_1v_3), \dep_{\frac{2}{8}}(v_1, y), \dep_0(v_2v_3, y)\}$ be a set of approximate atoms.
    Consider the team $X$ depicted in \Cref{fig:missing rule}.
    We have that $X \models \Sigma$, while having to remove the maximum number of assignments in each approximation atom in $\Sigma$.
    Next, consider the dependence atom $\dep(x, y)$.
    For $\dep(x, y)$ to be true, five assignments in $X$ need to be removed; therefore $X \models \dep_{\frac{5}{16}}(x, y)$ holds.
    However $\dep_{\frac{5}{16}}(x, y)$ cannot be derived by the rules in system $\mathbf{A}^*$.
    The best approximation atom derivable is $\dep_{\frac{6}{16}}(x, y)$ via transitivity of $\dep_{\frac{1}{8}}(x, v_1v_2)$ and $\dep_{\frac{2}{8}}(v_1, y)$.
    While $X \models \dep_{\frac{6}{16}}(x, y)$ is obvious, the ratio $\frac{6}{16}$ is clearly not optimal.
%\minna{Removed spacing btwn the paragrphs in the ex above to save space.}
    \begin{table}
            \caption{Example team for \Cref{ex:missing rule}.
        The team satisfies all approximation atoms in $\Sigma$, while also satisfying $\dep_{\frac{5}{16}}(x, y)$.
        However $\dep_{\frac{5}{16}}(x, y)$ cannot be derived from $\Sigma$ given the rules in system $\mathbf{A}^*$.}
        \label{fig:missing rule}
        \[
            \begin{array}{ccccccc}
            \toprule
                 & x & v_1 & v_2 & v_3 & y & z_0 \\
            \midrule
                s_0 & 0 & 0 & 0 & 1 & 1 & 0 \\
                s_1 & 0 & 0 & 0 & 2 & 2 & 1 \\
                s_2 & 0 & 0 & 0 & 3 & 3 & 2 \\
                s_3 & 0 & 1 & 0 & 4 & 4 & 3 \\
                s_4 & 0 & 0 & 1 & 0 & 5 & 4 \\
                s_5 & 0 & 0 & 0 & 0 & 0 & 5 \\
                s_6 & 0 & 0 & 0 & 0 & 0 & 6 \\
                \vdots & \vdots & \vdots & \vdots & \vdots & \vdots & \vdots \\
                s_{15} & 0 & 0 & 0 & 0 & 0 & 15 \\
            \bottomrule
            \end{array}
        \]
    \end{table}
\end{example}

The next result shows that this example is not an isolated case, but that all teams satisfying $\Sigma$ must also satisfy $\dep_{\frac{5}{16}}(x, y)$.
\begin{restatable}{prop}{lemMissing}
    Let $\Sigma = \{\dep_0(x), \dep_{\frac{1}{8}}(x, v_1v_2), \dep_{\frac{2}{8}}(x, v_1v_3), \dep_{\frac{2}{8}}(v_1, y), \dep_0(v_2v_3, y)\}$ be a set of approximate atoms.
    If $X \models \Sigma$ for any team $X$, then $X \models \dep_{\frac{5}{16}}(x, y)$.
\end{restatable}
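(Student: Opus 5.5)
Since $X\models\dep_0(x)$, the variable $x$ is constant on $X$. Hence for any tuple $w$, the atom $\dep_p(x,w)$ holds in $X$ exactly when some single value of $w$ is taken by at least $(1-p)|X|$ assignments. This is because after removing a subteam $Y$, the remainder satisfies $\dep(x,w)$ iff $w$ is constant on $X\setminus Y$. Let $n=|X|$; the case $n=0$ is trivial. The goal is therefore to find a value $d$ of $y$ with $|\{s\in X: s(y)=d\}|\geq \frac{11}{16}n$. Then removing the remaining at most $\frac{5}{16}n$ assignments witnesses $\dep_{\frac{5}{16}}(x,y)$. Throughout I work with real inequalities, so no rounding issues arise.

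\textbf{Step 1: a large block on which $y$ is constant.} From $\dep_{\frac18}(x,v_1v_2)$ I obtain a value $(a,b)$ with $A=\{s: s(v_1v_2)=ab\}$ of size $|A|\ge\frac78 n$. From $\dep_{\frac28}(x,v_1v_3)$ I obtain $(a',c)$ with $B=\{s:s(v_1v_3)=a'c\}$ of size $|B|\ge \frac68 n$. Since $\frac78+\frac68>1$, the sets $A$ and $B$ intersect, which forces $a'=a$. Put $V=\{s: s(v_1)=a\}\supseteq A\cup B$. Inclusion–exclusion gives
\[
|A\cap B|\ \ge\ |A|+|B|-|A\cup B|\ \ge\ \tfrac{13}{8}n-|V| .
\]
Moreover, every $s\in A\cap B$ has $s(v_2v_3)=bc$. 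By $\dep_0(v_2v_3,y)$, the variable $y$ takes a single value $d$ on all of $A\cap B$.

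\textbf{Step 2: using $\dep_{\frac28}(v_1,y)$.} Let $R$ be a removed set with $|R|\le\frac14 n$ such that $X\setminus R\models\dep(v_1,y)$. Then $y$ takes a single value $d'$ on $V\setminus R$. I claim $d'=d$. Otherwise every assignment of $A\cap B\subseteq V$ would have to lie in $R$. That would give $|A\cap B|\le\frac14 n$, contradicting $|A\cap B|\ge \frac{13}{8}n-|V|\ge\frac58 n$, where the last step uses $|V|\le n$. Consequently $y=d$ on $V\setminus R$, and this set has size at least $|V|-\frac14 n$.

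\textbf{Step 3: balancing.} The number of assignments with $y=d$ is therefore at least
\[
\max\Bigl\{\tfrac{13}{8}n-|V|,\ |V|-\tfrac{2}{8}n\Bigr\}\ \ge\ \tfrac12\Bigl(\tfrac{13}{8}-\tfrac28\Bigr)n=\tfrac{11}{16}n,
\]
since a maximum is at least an average. This yields $X\models\dep_{\frac{5}{16}}(x,y)$.

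The crux, and the step I expect to need most care, is Step 3. Neither route alone suffices: the intersection $A\cap B$ and the set $V\setminus R$ each only guarantee $\frac{10}{16}n$ in the worst case. The point is that the size of $V$ trades off between the two bounds, and that $d=d'$ lets both count toward the same value of $y$. I will also check that Step 2 correctly uses the fact that a single removal set $R$ works uniformly for all values of $v_1$.
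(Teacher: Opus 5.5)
Your proof is correct and is essentially the paper's argument phrased with sets instead of error-type ratios. The paper likewise compares the route through $\dep_0(v_2v_3,y)$ with the route through $v_1$ and $\dep_{\frac{2}{8}}(v_1,y)$; its computation amounts to bounding the cost of the first route by $\frac{3}{8}-r$ with $r=1-|V|/n$ (your estimate $|A\cap B|\geq\frac{13}{8}n-|V|$), and its ``worst case $q=r+\frac{2}{8}$'' step is your max-versus-average balancing that yields $\frac{5}{16}$. Your version is a bit more careful, since it justifies that the two majority values of $v_1$ coincide, which the paper silently assumes by normalising both to $0$. Note also that the identification $d=d'$ in Step 2, though correct, is not needed, because taking the larger of the two counts works even if they concern different values of $y$.
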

However, since $\Sigma \not\vdash_{\mathbf{A}^*}\, \dep_{\frac{5}{16}}(x, y)$,
we conclude that $\mathbf{A}^*$ is an incomplete set of rules.

It is worth noting that the missing rule does not only show incompleteness in the case with tuples as \emph{dependent} variables (i.e., in the right part of the atom). If one studies \emph{right unary} dependence atoms, i.e., atoms of the form $\dep_p(x_1 \dots x_n,y_1)$, one can simulate atoms of the form $\dep_p(x,y_1y_2)$ by introducing a new variable $z$ and using dependencies $\dep_0(z,y_1)$, $\dep_0(z,y_2)$, $\dep_0(y_1y_2,z)$, $\dep_p(x,z)$.

\section{Restricted Completeness Theorem}

% +++ Matilda writes 

% \matilda{Do we define that in general, the RHS must be a nonempty sequence of variables, maybe not? $\dep(x,\;)\equiv\top$, right? }

% \matilda{Should we use multiteams or an unrelated variable to the same effect?}
% \minna{Multiteams might be good, because then we can mention them also in the monoid-case? (Although there we have the "weighted" semantics instead of rations.)}

We prove the completeness of the original system in \Cref{def:OGrules} for a restricted set of approximate dependence atoms. Namely, when we restrict to rational approximations $p\in\mathbb{Q}\cap[0,1]$ and to unary and so-called 1-constancy atoms of the form $\dep_p(x,y)$, where $|x|\leq 1$ and $|y|=1$.

% We also restrict the proof system to unary and 1-constancy atoms in the form of $\mathbf{A}^1$ containing the arity restricted variants of rules

Due to the arity restrictions, we can omit rules $A4$ and $A5$. We denote the remaining rules restricted to unary and 1-constancy atoms by $A1^1$, $A2^1$, $A3^1$, $A6^1$ and $A7^1$ and obtain system $\mathbf{A}^1$. In particular, $A1^1$ takes the simple form `$\dep_0(x,x)$', and $A3^1$ is of the form `if $\dep_p(y)$, then $\dep_p(u,y)$'.

\begin{restatable}{thm}{thmUnary}\label{completeness}
     Let $\Sigma\cup \{\dep_p(x,y)\}$ be a finite set of unary and 1-constancy rational approximate dependence atoms. If  $\,\Sigma\models \dep_p(x,y)$, then $\Sigma\vdash_{\mathbf{A}^1} \dep_p(x,y)$.% using ++unary+++ rules $A1$, $A2$, $A3$, $A6$ and $A7$ +++ from \Cref{def:OGrules}. \label{R compl}
 \end{restatable}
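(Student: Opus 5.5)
We argue by contraposition: assuming $\Sigma\not\vdash_{\mathbf{A}^1}\dep_p(x,y)$, we construct a finite team $X$ with $X\models\Sigma$ and $X\not\models\dep_p(x,y)$. The first step is to turn derivability into a shortest-path problem. Let $V$ be the variables occurring in $\Sigma\cup\{\dep_p(x,y)\}$, together with a special node $\varepsilon$ for the empty tuple. Build a weighted directed graph $G$ with an edge $u\to w$ of weight $q$ whenever $\dep_q(u,w)\in\Sigma$; here $u$ may be $\varepsilon$, which covers the 1-constancy atoms. We add edges $\varepsilon\to u$ for every $u$: by $A3^1$, each edge $\varepsilon\to w$ of weight $q$ gives an edge $u\to w$ of weight $q$. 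For $u,w\in V\cup\{\varepsilon\}$, let $d(u,w)$ be $\min\{1,\text{length of a shortest path}\}$, with $d(u,u)=0$ by $A1^1$ and $d\le 1$ by $A2^1$. By $A6^1$ and $A7^1$, $\Sigma\vdash_{\mathbf{A}^1}\dep_q(u,w)$ for every $q\ge d(u,w)$. Conversely, a routine induction on derivations shows that every derivable atom $\dep_q(u,w)$ has $q\ge d(u,w)$. So the hypothesis says exactly $p<d(x,y)$. Write $d$ for $d(x,\cdot)$, or $d(\varepsilon,\cdot)$ in the case $x=\varepsilon$.

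The second step builds the countermodel from these distances. All weights are rational, so there is a common denominator $N$ with every $d(x,w)$ and $p$ in $\frac1N\mathbb{Z}$. Take a team of $N$ assignments $s_1,\dots,s_N$, scaling $N$ up if extra room is needed. All assignments give $x$ the value $0$. For each variable $w$, set
\[
s_i(w)=\begin{cases}0 & \text{if } i\le N(1-d(w)),\\ i & \text{otherwise.}\end{cases}
\]
This is a threshold team: the assignments $s_i$ with $i>N(1-d(w))$ are exactly those on which $w$ deviates, and these deviation sets are nested, growing with $d(w)$. Then $\dep(x,w)$ requires removing all but one of the $N(1-d(w))+\dots$ values. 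The cleaner count is that the minimal removal set has size $N d(w)$ (or $Nd(w)-1$ if $d(w)>0$, in which case we keep one deviating assignment), and we shift the scaling to remove this off-by-one. Since $d(y)>p$, the team falsifies $\dep_p(x,y)$.

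The third step checks $X\models\dep_q(u,w)$ for each $\dep_q(u,w)\in\Sigma$. Remove the assignments where $w$ deviates but $u$ does not. Because the deviation sets are nested, this set has size $N\max(0,d(w)-d(u))$. The triangle inequality $d(w)\le d(u)+q$, given by the edge $u\to w$, bounds this by $Nq$. On the remaining assignments we have to show that $u$ determines $w$. Where $u$ deviates it takes distinct values, so those assignments pose no problem. Where neither deviates, both are $0$. For $u=\varepsilon$ we use $d(w)\le q$ directly.

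The main obstacle is the off-by-one bookkeeping. A single deviating assignment is already consistent with the dependence, and the case $x=\varepsilon$ versus a unary $x$ has to be handled as well. I would fix both by duplicating each assignment $k$ times with distinct values on a fresh tagging scheme, so that every deviating block contains at least two distinct values. The completeness error in the original proof most likely sits in exactly this rounding, so that is where care is needed.
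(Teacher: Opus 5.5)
Your strategy is the paper's. You take $d(w)$ as the least derivable approximation of $\dep(x,w)$. You build a nested ``threshold'' team in which $w$ takes pairwise distinct values on a block of size proportional to $d(w)$ and a common value elsewhere. You verify the atoms of $\Sigma$ via $d(w)\le d(u)+q$. The shortest-path formulation is a tidy way to get existence of $d$ and this triangle inequality; note that the edges you want are $u\to\varepsilon$ of weight $0$, not $\varepsilon\to u$.

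\textbf{Gap in step 2.} A team is a \emph{set} of assignments. Every $s_i$ with $i\le N(1-\max_w d(w))$ is the all-zero assignment, so these collapse and $|X|<N$ as soon as there are two such indices. All your ratio counts assume $|X|=N$. Concretely, take the example of Table~\ref{R team}: $\Sigma=\{\dep_{\frac14}(x,w),\dep_{\frac14}(w,y)\}$ with target $\dep_{\frac14}(x,y)$. With $N=4$ you get $s_1=s_2$, so the team has three elements, and $\dep_{\frac14}(x,w)$ fails because removing one of three assignments exceeds $\frac14\cdot 3$. Scaling $N$ up only enlarges the collapsing block: with $N=8$ you get five elements and must remove two.

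Your closing ``tagging scheme'' addresses a different concern. As phrased (distinct tags among the copies of each $s_i$), it does not force copies of \emph{different} but identical $s_i$ to differ. The missing idea is the paper's dummy variable: a fresh $z_0$ with $s_i(z_0)=i$, which occurs in no atom but makes all $N$ assignments distinct. Some such device is unavoidable. Over $D=\{x,y\}$ alone, $\{\dep_0(x),\dep_{\frac14}(x,y)\}$ really does entail $\dep_0(x,y)$, since with $x$ constant all assignments have distinct $y$-values.

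\textbf{Off-by-one.} With $z_0$ added, steps 2 and 3 go through, but your diagnosis of the off-by-one should be corrected. For $0<d(w)<1$ the zero block is nonempty and is a largest $w$-class. So the least removal for $\dep(x,w)$ is exactly $Nd(w)$, and a single deviating assignment is \emph{not} compatible with the dependence. The only exceptional case is $d(y)=1$. There all $y$-values are distinct and the least removal is $N-1$, so you need $N-1>pN$. This fails for $p=1-\frac1N$ (e.g.\ $\Sigma=\emptyset$, $p=\frac12$, $N=2$), and is secured by taking $N$ to be twice the common denominator.

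The paper instead uses $n$ assignments against the common denominator $n-1$. Each $d(w)=\frac{k}{n-1}$ is then realised by the strictly smaller ratio $\frac{k}{n}$, which gives the needed slack both for the atoms of $\Sigma$ and for refuting the target in one stroke.
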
 

 % \asa{Should we remove $z_0$ in the statement of the theorem?}

 To prove completeness under these restrictions, we construct a counterexample team $X$ similar to the one in \cite{Väänänen2017}, but with some core differences, one of which we discuss next. Let $\Sigma\cup \{\dep_p(x,y)\}$ be as in the above Theorem. Our goal is to show, under the assumption $\Sigma\not\vdash_{\mathbf{A}^1}\dep_p(x,y)$, that there is a team $X$ that satisfies all atoms in $\Sigma$ but not $\dep_p(x,y)$.
 Like in \cite{Väänänen2017}, for each variable $u$ we find the smallest approximation $d(u)$ for which $\dep_{d(u)}(x,u)$ is derivable from $\Sigma$. Differing from \cite{Väänänen2017}, we can find the least common denominator $n-1$ for the approximations appearing in $\Sigma\cup \{\dep_p(x,y)\}$, since they are rational numbers. We then construct a team of size $n$ such that the approximations $d(u)=\frac{k}{n-1}$ are \emph{uniformly} encoded with a slightly better approximation, namely $\frac{k}{n}$. This type of uniformity is missing from \cite{Väänänen2017}, making that completeness proof erroneous even for unary atoms, possibly because of the attempt to cover real-numbered approximations.   % and subsequently an $\epsilon\in\mathbb{Q}\cap[0,1]$ such that for each variable $u$, the smallest approximation $r$ for which $X\models\dep_r(x,u)$ is $d(u)-\epsilon$. 
 Two examples of counterexample teams are presented in \Cref{R team}, one of which takes advantage of a dummy variable $z_0$ and the other does not. %. An alternative counterexample team without this dummy variable is presented in the same table in the form of $X'$. %This uniformity is missing in \cite{Väänänen2017} and causes an error in the proof. 
A similar construction for unary approximate \emph{inclusion atoms} with rational approximations also appeared in \cite{Haggblom2026}.

 % \begin{proof}
 %     sketch.
 % \end{proof}

\begin{table}[tb]
     \centering
      \caption{Suppose that $\dep_{\frac{1}{4}}(x,y)$ is not derivable from $\Sigma$ using rules from system $\mathbf{A}^1$, where $\Sigma$ is $\{\dep_{\frac{1}{4}}(x,w),\, \dep_{\frac{1}{4}}(w,y)\}$. Then $d(x)=0$, $d(w)=\frac{1}{4}$, and $d(y)=\frac{1}{2}$ and we can construct counterexample teams $X$ and $Y$ as illustrated.}
     \label{R team}
 \[
\begin{array}{c ccc cc}
\toprule 	
 X \quad & x  &w&  y & z_0 \\ 
\midrule 

s_0& 0 & 0& 0
  & 0
  \\
s_1&   0 & 1& 1
  & 1
  \\
s_2& 0 & 1& 2
 & 2
  \\
 s_3&  0 & 1&2
  & 3
  \\
s_4&  0 & 1& 2
 & 4
  \\
\bottomrule 
\end{array}
\hspace{2cm}
\begin{array}{c ccc c}
\toprule 	
 Y \quad & x  &w&  y  \\ 
\midrule 

s_0& 0 & 0& 0
  
  \\
s_1&   0 & 1& 1

  \\
s_2& 0 & 1& 2

  \\
 s_3&  a & a&a
  
  \\
s_4&  b& b& b

  \\
\bottomrule 
\end{array}
\] 
 \end{table}
 % To extend the completeness result further could include removing the arity restriction, allowing real-numbered approximations, and not relying on a dummy variable to prove the completeness theorem. As we have seen, allowing atoms $\dep_p(x,y)$, $|x|,|y|\leq 2$, yields a missing rule.
\section{Model Checking}
Model checking is a central decision problem in every logic.
It asks the simple question whether a formula of the logic is true in a structure.
One often considers the \emph{data complexity} variant where the formula is fixed and the input consists of only the structure.
For dependence logic, model checking is $\NP$-complete in general, even for quantifier-free disjunctions of only three dependence atoms, but in $\NL$, when restricted to disjunctions of two atoms \cite{DBLP:journals/sLogica/Kontinen13}.
% Kontinen \cite{DBLP:journals/sLogica/Kontinen13} also showed complexity
However, with approximations even simpler formulas can have intractable model checking, as was shown for the approximate operator in \cite{DBLP:journals/amai/DurandHKMV18}.
Our first result demonstrates that with a single approximate dependence atom, disjunctions of just two atoms become $\NP$-complete.
In the setting of dependence logic, disjunctions are not defined globally over the whole team but rather split the team into two parts, each satisfying one of the two disjuncts.
Formally, $X \models \phi_1 \lor \phi_2$ if there are teams $X_1, X_2$ such that $X = X_1 \cup X_2$ and $X_i \models \phi_i$ for $i \in \{1,2\}$.

\begin{restatable}{thm}{MCuvxy}\label{thm:disjunction NP-complete}
    The model checking problem for $\dep_{\frac{3}{13}}(u, v) \lor \dep_{0}(x, y)$ is $\NP$-complete.    
\end{restatable}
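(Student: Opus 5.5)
Membership in $\NP$ is immediate. We guess the split $X=X_1\cup X_2$ and a subteam $Y\subseteq X_1$ with $|Y|\le \frac{3}{13}|X_1|$, and then check in polynomial time that $X_1\setminus Y\models\dep(u,v)$ and $X_2\models\dep(x,y)$. We may also assume $X_1$ and $X_2$ are disjoint. Moving an assignment out of $X_1$ only lowers the fraction that must be removed there, so disjointness costs nothing for the first disjunct. The real work is hardness.

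For hardness I would reduce from $\ThreeSAT$, following Kontinen's reduction for the disjunction of three dependence atoms \cite{DBLP:journals/sLogica/Kontinen13}. In that proof the third atom is used as a counting or selection device. Here the approximate atom $\dep_{\frac{3}{13}}(u,v)$ plays that role, since it forces a large fixed proportion of the assignments to be placed consistently.

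The gadgets I would use are as follows.

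\begin{itemize}
\item \textbf{Variable gadgets.} For each propositional variable $p$, create assignments that agree on $x$ (say $x=p$) but differ on $y$ according to truth value $0$ or $1$. Then $X_2\models\dep(x,y)$ forces $X_2$ to pick at most one truth value per variable.
\item \textbf{Residual assignments.} Every assignment not placed in $X_2$ must go to $X_1$. There it must be consistent with $\dep(u,v)$ up to the allowed error.
\item \textbf{Clause gadgets.} For each clause, create assignments sharing a $u$-value with conflicting $v$-values. Keeping the clause assignments whose literal is true in $X_2$ costs nothing. Keeping them in $X_1$ forces deletions.
\item \textbf{Padding.} Add a large block of assignments that trivially satisfy both atoms, for example with fresh unique $u$-values. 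The padding tunes $|X_1|$ so that the budget $\frac{3}{13}|X_1|$ is exactly met when every clause and every variable is ``paid for'' once.
\end{itemize}

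The intended equivalence is that the error budget is met exactly when each variable contributes one removed assignment and each clause has at least one literal absorbed by $X_2$.

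The main obstacle is the accounting. Moving assignments between $X_1$ and $X_2$ changes the denominator $|X_1|$, so a cheating split might gain slack by pushing extra junk into $X_1$. I would handle this with two measures:

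\begin{itemize}
\item Make the padding rigid. Padding assignments should share $x$-values with conflicting $y$-values, so they cannot all sit in $X_2$ and must mostly lie in $X_1$.
\item Make the gadgets heavy. Each gadget assignment should be duplicated in blocks of $13$ (using a dummy variable to distinguish copies), so that rounding of $\frac{3}{13}|X_1|$ cannot be exploited.
\end{itemize}

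The specific constant $\frac{3}{13}$ suggests a concrete gadget ratio, for example $3$ forced removals per $13$ assignments per unit. I would first fix such a unit gadget and verify by a small case analysis that its optimum removal ratio is $\frac{3}{13}$ exactly when the encoded constraint is satisfied and strictly larger otherwise. I would then argue additivity over disjoint $u$-values to conclude that the whole team satisfies the formula if and only if the $\ThreeSAT$ instance is satisfiable.
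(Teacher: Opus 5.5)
The $\NP$-membership argument is fine, apart from a side remark. Deleting from $X_1$ an assignment that is not in $Y$ \emph{raises} $|Y|/|X_1|$. You get disjointness by replacing $X_2$ with $X_2\setminus X_1$, which is allowed because $\dep(x,y)$ is downward closed.

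The hardness part has a genuine gap. The property you postpone to ``a small case analysis'' is a gadget whose optimal ratio is $\frac{3}{13}$ exactly when its constraint holds and larger otherwise. Finding it is the entire difficulty, and the clause gadget you describe does not have it. On the natural reading, a clause is three assignments with a common $u$-value and distinct $v$-values. If $t$ of its literals are absorbed by $X_2$, it puts $3-t$ assignments into $X_1$ and costs $\max(0,2-t)$ removals. So it costs two removals if false, one if exactly one literal is true, and none if two or three are.

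Since $\dep_{\frac{3}{13}}(u,v)$ only bounds the pooled quotient $|Y|/|X_1|$, a falsified clause can be paid for by clauses with several true literals. For example, one falsified and two doubly true clauses cost $2$ removals out of $5$ in $X_1$, while three singly true clauses cost $3$ out of $6$. Nothing in your plan rules out this trade. This is why Kontinen's reduction does not transfer: there, two copies of $\dep(u,v)$ give a \emph{local} capacity of two $v$-values per $u$-value, whereas one approximate atom gives only a global error budget.

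Three further parts of your plan do not hold up. Conflict-free padding in $X_1$ has local ratio $0$, which breaks your final additivity step, since that step needs every block to have ratio at least $\frac{3}{13}$. The blocks of $13$ guard against a rounding issue that does not arise, because $|Y|\le\frac{3}{13}|X_1|$ is an exact rational inequality. Finally, your variable gadgets create no $u$-conflicts, so ``each variable contributes one removed assignment'' has no mechanism behind it.

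The missing idea is a clause gadget whose cost is \emph{never below} that of a satisfied clause, and equals it exactly when the clause is satisfied, however many literals are true. The paper gets this from the Garey--Johnson--Stockmeyer $\MaxTwoSAT$ gadget. With $d_i$ fresh, $(a_i\lor b_i\lor c_i)$ becomes the ten subclauses $(a_i),(b_i),(c_i),(d_i),(\bar a_i\lor\bar b_i),(\bar a_i\lor\bar c_i),(\bar b_i\lor\bar c_i),(a_i\lor\bar d_i),(b_i\lor\bar d_i),(c_i\lor\bar d_i)$. Under any truth assignment at most $7$ of them hold, and $7$ is attainable by choosing $d_i$ if and only if the clause is true.

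The encoding works as follows:
\begin{itemize}
\item Each subclause becomes a pair of assignments with its own $u$-value, $v\in\{0,1\}$, and $(x,y)$ encoding its literals; a unit subclause repeats its literal.
\item $X_2\models\dep(x,y)$ fixes a partial truth assignment.
\item $X_1\setminus Y$ keeps at most one assignment per subclause.
\item Each of the $k_i\ge 3$ falsified subclauses of clause $i$ puts both its assignments into $X_1$ and one of them into $Y$.
\end{itemize}
For a given truth assignment, the best quotient is therefore $K/(10m+K)$, where $K=\sum_i k_i$ and $m$ is the number of clauses. This is at most $\frac{3}{13}$ if and only if $K\le 3m$, that is, every $k_i=3$, that is, the $\ThreeSAT$ instance is satisfiable. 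This is where $\frac{3}{13}$ comes from, and no padding is needed.
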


Next, consider the complexity of a single approximate dependence atom.
It is well known that single dependence atoms are first-order definable.
This is no longer true for approximate dependence atoms.

\begin{restatable}{thm}{MCxy}
    The model checking problem for $\dep_{\frac{1}{2}}(x, y)$ is $\LOGSPACE$-hard.
\end{restatable}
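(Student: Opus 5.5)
The plan is to prove $\LOGSPACE$-hardness under logspace many-one reductions. I would reduce from a standard $\LOGSPACE$-complete problem, namely deterministic reachability: given a directed graph $G$ in which every vertex has out-degree at most one, and vertices $s,t$, decide whether $t$ is reachable from $s$. I would first fix two concrete teams over $\{x,y\}$. The team $X_{\mathrm{yes}}$ consists of two assignments with $x=0$ and $y$-values $0$ and $1$; removing one of the two (half the team) makes $\dep(x,y)$ true, so $X_{\mathrm{yes}}\models\dep_{\frac{1}{2}}(x,y)$. The team $X_{\mathrm{no}}$ consists of three assignments with $x=0$ and pairwise distinct $y$-values $0,1,2$. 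Here two of the three assignments must be removed, and $2>\frac{1}{2}\cdot 3$, so $X_{\mathrm{no}}\not\models\dep_{\frac{1}{2}}(x,y)$. Both checks are immediate from Definition~\ref{Vään.def.non-restr2}, since $\dep(x,y)$ holds in a team with constant $x$ exactly when $y$ is constant.

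The reduction runs as follows. On input $(G,s,t)$ with $n$ vertices, it follows the unique outgoing edge from $s$ for at most $n$ steps. At each step it stores only the current vertex and a step counter, which takes $O(\log n)$ bits. If $t$ is encountered, it outputs $X_{\mathrm{yes}}$; otherwise it outputs $X_{\mathrm{no}}$. Correctness holds because in an out-degree-one graph, $t$ is reachable from $s$ exactly when it is visited within the first $n$ steps of the unique walk. Hence $(G,s,t)$ is a yes-instance iff the output team satisfies $\dep_{\frac{1}{2}}(x,y)$.

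To make the dependence on the structure of the approximate atom more visible, I would also consider a variant with an output of size polynomial in $n$. This variant would write one assignment per visited vertex and adjust counts so that the number of assignments that must be removed crosses $\frac{1}{2}\cdot|X|$ exactly when $t$ is on the walk. For example, I would give every walk vertex a distinct $y$-value under a common $x$, and add a block of equal-$y$ padding whose size depends on whether $t$ was met. This variant is optional, since the two-gadget version already suffices for the statement.

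The main obstacle is the choice of reduction notion. The optimal removal size for a single atom is $|X|-\sum_{a\in X[x]}\max_b|\{s: s(x)=a,\ s(y)=b\}|$, which is computable with counting. So hardness under first-order (AC$^0$) reductions would place $\LOGSPACE$ inside a counting class, and I do not expect to achieve that. I would therefore state explicitly that the claimed hardness is with respect to logspace reductions. With that notion, the real content of the proof is that deterministic reachability embeds into the atom's model checking problem through the fixed gadgets above, and I would verify carefully that the walk simulation and gadget output use only logarithmic work space.
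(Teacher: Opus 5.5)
Your reduction is valid only for logspace many-one reductions, and under that notion the statement says nothing. If $B$ has a yes-instance $y_1$ and a no-instance $y_0$, then for every $A\in\LOGSPACE$ the map ``decide $A$ in logspace, then print $y_1$ or $y_0$'' is a logspace reduction. So every language other than $\emptyset$ and $\Sigma^*$ is $\LOGSPACE$-hard in this sense. Your proof is exactly this observation. Deterministic reachability, the gadgets and the padding variant do no work, and there is no embedding of reachability into the atom, contrary to what you call the ``real content''. The paper intends hardness under first-order reductions: it reduces from the $\LOGSPACE$-complete problem $\dep(x,y)\lor\dep(y,x)$ via the first-order map $X\mapsto X\cup X^\leftrightarrows$. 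Your proposal does not give that, so as a proof of the intended theorem it has a genuine gap, which you flag yourself.

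Your diagnosis of the obstacle is correct, and it also undermines the paper's proof. For a team over $\{x,y\}$, $X\models\dep_{\frac12}(x,y)$ iff $|X[x]|\geq\frac12|X|$ (keep one assignment per $x$-value). In general the problem is a counting condition in $\mathrm{TC}^0$, so hardness under first-order reductions would give $\LOGSPACE\subseteq\mathrm{TC}^0$.

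Consistently, the paper's reduction fails. One has $X\cup X^\leftrightarrows\models\dep_{\frac12}(x,y)$ iff $|X[x]|+|X[y]|\geq|X|$, which is a global count. By contrast, $X\models\dep(x,y)\lor\dep(y,x)$ requires something local. Take the bipartite graph whose vertices are the $x$-values and $y$-values and whose edges are the assignments of $X$. Then every connected component must have no more edges than vertices. The faulty step is ``$X\setminus(X_1\cup X_2)\neq\emptyset$, therefore $|X_1|+|X_2|<|X|$'', which ignores that $X_1$ and $X_2$ may overlap.

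Concretely, let $X$ consist of the $(x,y)$-pairs $(0,a),(0,b),(0,c),(1,a),(1,b),(1,c),(2,d)$. In a cover, the six assignments with $x\in\{0,1\}$ would need six distinct owning values among $0,1,a,b,c$, so $X\not\models\dep(x,y)\lor\dep(y,x)$. Yet $|X[x]|+|X[y]|=3+4=7=|X|$. Keeping one assignment per $x$-value in $X\cup X^\leftrightarrows$ therefore removes exactly half, and the image satisfies $\dep_{\frac12}(x,y)$.

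So neither your proposal nor the paper establishes a nontrivial $\LOGSPACE$-hardness. What can honestly be claimed is the vacuous logspace version you give, or weaker meaningful lower bounds such as the atom not being first-order definable.
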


\section{Conclusion and Future Work}

%--- More general completeness proof

%--- Monoid-setting (+++ Minna writes)

%--- Other ways to measure the approximation (quantity, local variant etc.)

%~\\~\\

%-- Is it so that the problem is not finitely axiomatizable in the general case?

%\Cref{missing} 
We introduced a semantic entailment that cannot be deduced by using system $\mathbf{A}^*$, meaning that the system is not complete in the general case of approximate dependencies. It seems likely that there are other similar semantic entailments involving different approximations,
%that cannot be deduced using the system, 
so the system might actually have several missing rules. This raises the question whether the implication problem is finitely axiomatisable or whether every complete system necessarily involves an infinite number of rules. If the problem is not finitely axiomatisable, a natural goal for future work would be to prove that this is indeed the case. It might still be possible to represent the axioms in a nice way that expresses the infinite rules in the form of a rule schema, e.g., as in \cite{10.1145/78935.78937,10.1145/3677120,barlag2026locallyconsistentkrelationsentailment}. Investigating this is also a relevant question for further research.

Since the implication problem for dependencies with real (or even rational) number approximations seems difficult to axiomatise, one may ask if this has something to do with some specific properties of these structures. Therefore, it might be useful to investigate the approximate dependence in the setting of monoids, where we may consider structures with different properties. A monoid is a tuple $K=(K,+,0_K)$ where $K$ is a set, $+$ is an associative binary operation on $K$, and $0_K$ is an identity element of $+$. A monoid $K$ is \emph{positive} if $a+b=0_K$ implies $a=0_K$ and $b=0_K$, and \emph{commutative} if $a+b=b+a$ for all $a,b\in K$. An \emph{ordered} 
%(commutative)
monoid is a commutative monoid $K$ with a partial order $\leq$ such that $0_K\leq a$ for all $a\in K$, and $a\leq b$ implies $a+c\leq b+c$ for all $a,b,c\in K$.
The idea is that we consider a totally ordered positive commutative monoid $K=(K,\leq,+,0_K)$, and for a finite team $X\subseteq\{s\mid s\colon D\to M\}$, define a $K$-team $\mathbb{X}$ as a function $\mathbb{X}\colon X\to K$. A $K$-team is essentially an annotated database relation or a $K$-relation, as introduced in \cite{green07} in the setting of semiring provenance. Denote $\text{supp}(\mathbb{X}):=\{s\in X\mid \mathbb{X}(s)\neq0_K\}$.
%, and for any $Y\subseteq X$, the $K$-team $\text{rem}_Y(\mathbb{X})$ such that $\text{rem}_Y(\mathbb{X})(s)=0_K$ if $s\in Y$ and $\text{rem}_Y(\mathbb{X})(s)=\mathbb{X}(s)$ otherwise. 
%Let $a\in K$.
Then for any $a\in K$, we can define a $K$-approximate dependence atom $\dep_a(x,y)$ with the following satisfaction relation: $\mathbb{X}\models \dep_a(x,y)$ iff there is $Y\subseteq X$ for which $\sum_{s\in Y}\mathbb{X}(s)\leq a$ and 
%the $K$-team $\mathbb{X}':=\text{rem}_Y(\mathbb{X})$
%$\mathbb{X}'(s)=\begin{cases}
%    0_K, \text{ if } s\in Y\\
%    \mathbb{X}(s), \text{ otherwise}
%\end{cases}$
$\text{supp}(\mathbb{X}\restriction_{X\setminus Y})\models\dep(x,y)$. Note that we define $\sum_{s\in \emptyset}\mathbb{X}(s)=0_K$, so $\mathbb{X}\models\dep_{0_K}(x,y)$ iff $\text{supp}(\mathbb{X})\models\dep(x,y)$.
If $K$ is the Boolean monoid $\mathbb{B}=(\{0,1\},\leq,\lor,0)$, the $K$-approximate dependencies correspond to the usual dependencies, where each $\dep_0(x,y)$ corresponds to $\dep(x,y)$ and each $\dep_1(x,y)$ is trivially satisfied. If $K$ is the probability monoid $\mathbb{P}=([0,1],\leq,+,0)$, where $\leq$ is the usual order of the unit interval $[0,1]$ and $a+b=\max\{a+b,1\}$, then the $K$-approximate dependencies correspond to the approximate dependencies over $K$-teams $\mathbb{X}$ that are uniform distributions, i.e., $\mathbb{X}(s)=1/|X|$ for all $s\in X$. Different dependency notions over $K$-teams have been studied, e.g., in \cite{hannula24,hirvonen26}\footnote{Note that these works also consider (conditional) independence atoms whose semantics is defined using both addition and multiplication, so their structure $K$ is a \emph{semiring} instead of a monoid.}.

%In \Cref{monoids}, 
In the setting of monoids, the satisfaction of the $K$-approximate dependencies is defined in such a way that the number of tuples that can be removed is measured as an absolute value instead of a team ratio. %In addition to the obvious topic of considering the implication problem in the monoid setting, 
One possible line of future research 
%could be to continue focusing on the case of 
on real approximations is to examine whether considering approximation as an absolute value instead of a ratio could help with obtaining a complete axiomatisation. Already in \cite{KIVINEN1995129}, variants of both absolute value approximations and ratio approximations were introduced. One can also consider other definitions, such as a \emph{local} variant where $X\models {\dep_p(x, y)}$ if there is a subteam $Y\subseteq X$, $|Y | \leq p \cdot |X\upharpoonright_{\{xy\}}|$, such that $X \setminus Y\models \dep(x, y)$, where $X\upharpoonright_{\{xy\}}$ is the team restricted to the variables $x$ and $y$. With this definition, $\{\dep_p(x,w), \dep_q(w,y)\}\not\models \dep_{p+q}(x,y)$, meaning any complete system, even restricted to unary atoms, would be different from the one in \cite{Väänänen2017}.

To take advantage of the team semantic framework, we could situate the approximate dependence atom in a language with connectives and quantifiers as in first-order dependence logic \cite{vaananen2007dependence}. Some first steps in this direction are taken in \cite{DBLP:journals/amai/DurandHKMV18}. With our gained understanding of unary and 1-constancy approximate dependence atoms, they could serve as the first building blocks of an approximate variant of first-order dependence logic.

% ++change logical implication to semantic entailment (or other way around)
% %\minna{Added something about the quantity variant. We might want to say something more about this (multiteams)?}

% ~\\~\\

% --- Other ways to measure the approximation (quantity, local variant etc.) \matilda{I'll add something here}

% ~\\~\\

\subsubsection*{Acknowledgements}

The authors want to thank Jouko Väänänen and the anonymous reviewers of LINDA 2026 for many helpful suggestions. The authors appreciate the project support by a Finnish-German co-operation Grant:
DAAD (German Exchange Service, project id 57710940), Research Council of Finland (project id 359650).
The first author appreciates funding by the German Research Foundation (DFG)
under the grant ME2479/3-1 and project id 511769688.
The fourth author appreciates funding by the Magnus Ehrnrooth Foundation.

\bibliography{bibfile}

\newpage

\appendix
\section{Proofs}
The appendix includes the proofs omitted from the main part of the paper.

%\asa{Should these come in numerical order?}
%\minna{Fixed.}

\lemMissing*
\begin{table}[t]
    \caption{Types of error for $v_1, v_2$ and $v_3$. We denote by $\bullet$ any non zero value.}
    \label{tab:bad assignments}
    \[
    \begin{array}{ccccc}
    \toprule
        \text{type} & x & v_1 & v_2 & v_3 \\ 
    \midrule
        \mathcal{A} & 0 & \bullet & 0 & 0 \\
        \mathcal{B} & 0 & \bullet & \bullet & 0 \\
        \mathcal{C} & 0 & \bullet & \bullet & \bullet \\
        \mathcal{D} & 0 & \bullet & 0 & \bullet \\
        \mathcal{E} & 0 & 0 & 0 & \bullet \\
        \mathcal{F} & 0 & 0 & \bullet & \bullet \\
        \mathcal{G} & 0 & 0 & \bullet & 0 \\
    \bottomrule
        % h & 0 & 0 & 0 & 0 & 
    \end{array}
    \]
\end{table}
\begin{proof}
    Fix any team $X \models \Sigma$.
    Notice that there are seven different types of `error' assignments in $X$ for variables $v_1, v_2$ and $v_3$, i.\,e., assignments which need to be removed to make $\dep_{\frac{1}{8}}(x, v_1v_2)$ or $\dep_{\frac{2}{8}}(x, v_1v_3)$ true.
    These `error types' are depicted in \Cref{tab:bad assignments} and labelled $\mathcal{A}$ to $\mathcal{G}$.
    Note that each type corresponds to a non-empty subset of the set $\{v_1, v_2, v_3\}$.
    We denote by $a$ the ratio of error type $\mathcal{A}$ occurring in $X$, i.\,e., 
    \[
        a = \frac{|\{s \in X \mid s \text{ is of error type } \mathcal{A}\}|}{|X|}.
    \]
    Likewise define the ratios $b$ to $g$.    

    We have two constraints on these ratios given by $\Sigma$:
    \[
        a+b+c+d+f+g \leq \frac{1}{8}
    \]
    the ratio of assignments which have to be removed for $\dep_{\frac{1}{8}}(x, v_1v_2)$, and
    \[
        a+b+c+d+e+f \leq \frac{2}{8}.    
    \]
    the ratio of assignments which have to be removed for $\dep_{\frac{2}{8}}(x, v_1v_3)$.

    We now show, that $p = \frac{5}{16}$ is the smallest ratio to guarantee $X \models \dep_p(x, y)$.
    Here, there are two ways to show $X \models \dep_p(x, y)$.
    First, via $\dep_q(x,v_2v_3)$ and $\dep_0(v_2v_3, y)$ with unknown ratio $q = b+c+d+e+f+g$.
    Second, via $\dep_r(x, v_1)$ and $\dep_{\frac{2}{8}}(v_1, y)$ with unknown ratio $r = a+b+c+d$.
    With transitivity it follows that $p = \min(q + 0, r + \frac{2}{8})$.
    We need to show that $p \leq \frac{5}{16}$ for any $X$, i.\,e., we want to show the maximum value $p$ can take is less than or equal to $\frac{5}{16}$.
    Thus, we assume $q = r + \frac{2}{8}$ as the worst case for $p$ and try to find the maximum.
    If we plug in the lower bounds we get
    \[
        b+c+d+e+f+g = a+b+c+d + \frac{2}{8}
    \]
    which implies $e+f+g = \frac{2}{8} + a$.
    From the first constraint, it follows that $f+g \leq \frac{1}{8} - r$.
    Thus $e \geq \frac{1}{8} + r + a$.
    From the second constraint, it follows that $e \leq \frac{2}{8} - r - f$.
    Together we have
    \begin{align*}
        \frac{2}{8} - r - f &\geq \frac{1}{8} + r + a \\
        \frac{1}{8} - a - f &\geq 2r \\
        \frac{1}{16} - \frac{a + f}{2} &\geq r \\
    \end{align*}
    Since we want to maximize $r$, assume $a = f = 0$ and $r = \frac{1}{16}$.
    So we get $p = \frac{2}{8} + \frac{1}{16} = \frac{5}{16}$ as desired.
    Therefore any team $X$ that satisfies $\Sigma$ must also satisfy $\dep_{\frac{5}{16}(x, y)}$.
\end{proof}

\thmUnary*

 \begin{proof}
 Let $D_0$ be the set of variables in $\Sigma\cup \{\dep_p(x,y)\}$, and let $z_0$ be a variable that does not occur in any of the atoms in the set. We write $\vdash$ for a derivation using rules $A1^1$, $A2^1$, $A3^1$, $A6^1$ and $A7^1$.
 Suppose that $\Sigma\not\vdash \dep_p(x,y)$. By $A1^1$ and $A7^1$, we can derive $\vdash \dep_0(x,x)\vdash \dep_p(x,x)$ and by $A2^1$ we can derive $\vdash \dep_1(x,y)$, so it follows that $y$ is different from $x$ and that $p<1$. 
 
 For each variable $u$ in $D_0$, define $d(u)$ to be the smallest rational number for which $\Sigma\vdash \dep_{d(u)}(x,u)$, which is guaranteed to exist since $\Sigma$ is finite and $\vdash \dep_1(x,u)$ always holds by $A2^1$ (also in the case that $x$ is the empty sequence).
 
Let $n-1$ be the least common denominator for the approximations appearing in $\Sigma\cup \{\dep_p(x,y)\}$. For each approximation $q$, let $q^\prime$ be such that $q=\frac{q'}{n-1}$, and similarly, we write $d(u)'$ whenever $d(u)=\frac{d(u)'}{n-1}$, etc.
We build the counterexample team $X=\{s_0,\dots s_{n-1}\}$ such that for each variable $w\in D_0$, 
 \[   
s_i(w) = 
     \begin{cases}
       i &\quad\text{if } i\leq d(w)',\\
       d(w)' &\quad \text{otherwise.} \\
     \end{cases}
\]
Additionally, let $s_i(z_0)=i$ for all $i\in\{0,\dots,n-1\}$.

 %Additionally, the team is constructed such that for $w\in Var$, $X[w]=d(w)'+1$.
%
%
% Before showing that the counterexample team $X$ is as intended, we prove the following claim: For any atom $\dep_q(u,v)$,  $$X\models \dep_q(u,v)  \text{ iff } |X[v]\setminus X[u]|\leq qn.$$
% %We have that $X\models \dep_q(u,v)$ if we have to remove at most $qn$ assignments from $X$ to satisfy the atom $\dep(u,v)$. 
%
The team is constructed such that for $w\in D_0$, $|X[w]|=d(w)'+1$, see \Cref{R team} for an example. As a consequence, we have the equivalences:
%By construction of $X$, $X[v]=d(v)'+1$ and $X[u]=d(u)'+1$. We then obtain the following equivalences: 
%
% $$\quad\quad\quad\quad X\models \dep_q(u,v)  \text{ iff }  d(v)'- d(u)'\leq qn  \text{ iff } |X[v]\setminus X[u]|\leq qn. \quad\quad\quad\quad (*)$$

\begin{equation*}
    X\models \dep_q(u,v)  \text{ iff }  d(v)'- d(u)'\leq qn  \text{ iff } |X[v]\setminus X[u]|\leq qn. \tag{$*$}
\end{equation*}

Now, let us show that $X\not\models \dep_p(x,y)$. Observe that $\Sigma\vdash \dep_0(x,x)$ by $A1^1$, so $X[x]=\{0\}$ if $x$ is a nonempty sequence and otherwise $\emptyset$, so $|X[x]|\leq 1$. Furthermore, since $\Sigma\not\vdash \dep_p(x,y)$, we have by construction of $X$ that $|X[y]|\geq p'+2$, hence $|X[y]\setminus X[x]|\geq p'+1$. Now,  $$\frac{|X[y]\setminus X[x]|}{n}\geq\frac{p'+1}{n} > \frac{p'}{n-1}=p,$$ thus $X\not\models \dep_p(x,y)$ follows by ($*$). 

It remains to show that for each $\dep_r(u,v)\in\Sigma$, $X\models \dep_r(u,v)$. 
By construction of the team, $|X[u]\setminus X[x]|=d(u)'$. By rule $A6^1$ we have that $\Sigma\vdash \dep_{d(u)+r}(x,v)$, hence $|X[v]\setminus X[x]|\leq d(u)'+r'$. Now $$|X[v]\setminus X[u]|\leq (d(u)'+r')-d(u)'=r'=r(n-1)< rn,$$ from which $X\models \dep_r(u,v)$ follows by ($*$).  

If we want to avoid the dummy variable $z_0$, identify the smallest $k$ for which $s_k\in X$ and $s_k$, $s_l$ are the same assignments restricted to the variables in $D_0$ whenever $k< l\leq n-1$. For each such $s_l$, replace it with the assignment $s_l^a$ such that $s_l^a(w)=a_l$ for all $w\in D_0$, where $a_l$ is a fresh value. Let $Y$ be the team $\{s_0, \dots, s_{k},s_{k+1}^a, \dots s_{n-1}^a\}$. Checking that $Y$ is a counterexample team is similar to the proof for $X$. 
\end{proof}

\MCuvxy*
\begin{proof}
    Membership in $\NP$ is quite obvious.
    A non-deterministic machine can guess the split into $X_1$ and $X_2$ for the disjunction and additionally guess the removed $\frac{3}{13}$ assignments $Y_1 \subseteq X_1$.
    Then the only thing left to do is to check whether $X_1 \setminus Y_1 \models \dep(u, v)$ and $X_2 \models \dep(x, y)$ is true.
    
    We continue with showing $\NP$-hardness.
    To this end, we present a reduction from $\ThreeSAT$.
    This reduction will use a similar construction as the reduction to the decision problem of $\MaxTwoSAT$ presented in \cite{DBLP:journals/tcs/GareyJS76}.

    Let $\phi$ be an instance of $\ThreeSAT$, i.\,e., a set of clauses $(a_i \lor b_i \lor c_i)$ for $i \in I$, where each $a_i, b_i$ and $c_i$ corresponds to a variable or its negation.
    First, split each clause $(a_i \lor b_i \lor c_i)$ into four subclauses of size 1 and six subclauses of size 2:
    \[
        (a_i), (b_i), (c_i), (d_i), (\bar a_i \lor \bar b_i), (\bar a_i \lor \bar c_i), (\bar b_i \lor \bar b_i), 
        (a_i \lor \bar d_i), (b_i \lor \bar d_i), (c_i \lor \bar d_i).
    \]
    The variable $d_i$ is a fresh variable introduced for each clause.
    As noted in \cite{DBLP:journals/tcs/GareyJS76} exactly seven of these ten clauses can be true when the original clause was satisfied, while at most six can be true if $a_i, b_i$ and $c_i$ are false.

    We now construct a team $X_i$ such that the clause is satisfied if and only if a split between 17 of 20 assignments is possible.
    Let $p\colon \operatorname{Lit} \to \{0, 1\}$ be the function determining the parity of literals; while $\bar p$ denotes the opposite, i.e., $\bar p(x) = 1- p(x)$ for all $x \in \operatorname{Lit}$.
    Now, the team $X_i$ is as depicted in \Cref{fig:team X_i}.
    The full team of this reduction is then the union of all $X_i$ for $i \in I$, i.e. $X = \bigcup_{i \in I} X_i$.

    \begin{table}[b]
        \caption{Team $X_i$ given a clause over variables $\{a,b,c\}$, with $p(x)$ the parity of $x$ in the clause; and $\bar p$ the opposite.}
        \label{fig:team X_i}
        \[
            \begin{array}[t]{cccc}
            \toprule
                u & v & x & y \\
            \midrule
                i_0 & 0 & a & p(a) \\
                i_0 & 1 & a & p(a) \\
                i_1 & 0 & b & p(b) \\
                i_1 & 1 & b & p(b) \\
                i_2 & 0 & c & p(c) \\
                i_2 & 1 & c & p(c) \\
                i_3 & 0 & d_i & 1 \\
                i_3 & 1 & d_i & 1 \\
            \bottomrule
            \end{array}
            \qquad
            \begin{array}[t]{cccc}
            \toprule
                u & v & x & y \\
            \midrule
                i_4 & 0 & a & \bar p(a) \\
                i_4 & 1 & b & \bar p(b) \\
                i_5 & 0 & a & \bar p(a) \\
                i_5 & 1 & c & \bar p(c) \\
                i_6 & 0 & b & \bar p(b) \\
                i_6 & 1 & c & \bar p(c) \\
            \bottomrule
            \end{array}
            \qquad
            \begin{array}[t]{cccc}
            \toprule
                u & v & x & y \\
            \midrule
                i_7 & 0 & a & p(a) \\
                i_7 & 1 & d_i & 0 \\
                i_8 & 0 & b & p(b) \\
                i_8 & 1 & d_i & 0 \\
                i_9 & 0 & c & p(c) \\
                i_9 & 1 & d_i & 0 \\
            \bottomrule
            \end{array}
        \]
    \end{table}
    
    Correctness follows from \cite{DBLP:journals/tcs/GareyJS76} in a mostly straightforward manner.
    First, observe that $\dep_0(x, y)$ corresponds to assignments of the variables.
    That is, in the split $X_2 \models \dep_0(x, y)$, each variable has to agree on its parity.
    Second, the split of $\dep_{\frac{3}{13}}(u, v)$ can take one assignment from each subclause without considering the approximation.
    Taking the approximation into account, this side of the split can take three additional assignments for a total of 13.
    Ordinarily, these three assignments would violate the dependence atom; however, they can be removed using the approximation.
    Thus, if seven of the ten subclauses are true by an assignment team, $X_i$ can be split.
    If this is true for all $X_i$, then this also holds for $X$.
    In contrast, if only six assignments are true (as is the case for the non-satisfying assignment of the clause), then the team $X_i$ cannot be split, and therefore also  $X$ cannot be split.
\end{proof}

\MCxy*
\begin{proof}
    Define $X^\leftrightarrows$ as the team that swaps and renames the values of $x$ and $y$ in every assignment of $X$.
    For example, if $s(x) = 0, s(y) = 1$ is an assignment in $X$, then $X^\leftrightarrows$ contains $\bar s(x) = \bar 1, \bar s(y) = \bar 0$ as an assignment.
    %, i.e., $Y = \{s(yx) \mid s(xy) \in X\}$
    
    Let $X$ be a team over $\{x, y\}$.    
    We reduce from the model checking problem for $\dep(x, y) \lor \dep(y, x)$, which is $\LOGSPACE$-complete \cite{DBLP:conf/csl/0001KM26}, by mapping $X$ to $X \cup X^\leftrightarrows$.
    It is easy to see that this can be done by a first-order reduction.

    Assume $X \models \dep(x, y) \lor \dep(y, x)$.
    Then there exists a split $X = X_1 \cup X_2$ such that $X_1 \models \dep(x, y)$ and $X_2 \models \dep(y, x)$. 
    The same split also witnesses satisfaction for $X^\leftrightarrows$, when swapping the teams, i.e., $X^\leftrightarrows_2 \models \dep(x, y)$ and $X^\leftrightarrows_1 \models \dep(y, x)$.
    Since the values in $X$ and $X^\leftrightarrows$ are disjoint $X_1 \cup X^\leftrightarrows_2 \models \dep(x, y)$.
    Now, for the size of these subteams we have that
    \begin{align*}
        |X_1| + |X_2| &\geq |X| \\
        |X_1| + |X^\leftrightarrows_2| &\geq |X| \tag{$|X_2| = |X^\leftrightarrows_2|$} \\
        |X_1 \cup X^\leftrightarrows_2| &\geq |X| \tag{$X_1 \cap X^\leftrightarrows_2 = \emptyset$} \\
        |X_1 \cup X^\leftrightarrows_2| &\geq \frac{1}{2} \cdot (|X| + |X^\leftrightarrows|) \tag{$|X| = |X^\leftrightarrows|$} \\
        |X_1 \cup X^\leftrightarrows_2| &\geq \frac{1}{2} \cdot |X \cup X^\leftrightarrows| \tag{$X \cap X^\leftrightarrows = \emptyset$}
    \end{align*}    
    Therefore $X \cup X^\leftrightarrows \models \dep_{\frac{1}{2}}(x, y)$.

    For the other direction assume $X \not\models \dep(x, y) \lor \dep(y, x)$.
    Then for all subteams $X_1, X_2 \subseteq X$ with $X_1 \models \dep(x, y)$ and $X_2 \models \dep(y, x)$ we have that $X \setminus (X_1 \cup X_2) \neq \emptyset$.
    Therefore $|X_1| + |X_2| < |X|$.
    The same is true for $X^\leftrightarrows, X^\leftrightarrows_1$ and $X^\leftrightarrows_2$.
    Thus
    \begin{align*}
        |X_1| + |X_2| + |X^\leftrightarrows_1| + |X^\leftrightarrows_2| &< |X| + |X^\leftrightarrows| \\ 
        2|X_1| + 2|X^\leftrightarrows_2| &< |X| + |X^\leftrightarrows| \\ 
        |X_1| + |X^\leftrightarrows_2| &< \frac{1}{2}(|X| + |X^\leftrightarrows|) \\ 
        |X_1 \cup X^\leftrightarrows_2| &< \frac{1}{2}|X \cup X^\leftrightarrows|
    \end{align*}
    holds and implies $X \cup X^\leftrightarrows \not\models \dep_{\frac{1}{2}}(x, y)$.
    % For the other direction assume $X \cup X^\leftrightarrows \models \dep_{\frac{1}{2}}(x, y)$.
    % Then there exists $X_1 \cup X^\leftrightarrows_2 \models \dep(x, y)$ with $|X_1 \cup X^\leftrightarrows_2| \geq \frac{1}{2} \cdot |X \cup X^\leftrightarrows|$ as before.
    % Thus we know $X_1 \models \dep(x,y )$ and $X_2 \models \dep(y, x)$ and most importantly $X_1 \cup X_2 = X$.
    % Therefore $X \models \dep(x, y) \lor \dep(y, x)$.
\end{proof}

\end{document}